\newtheorem{theorem}{Theorem}[section]
\newtheorem{hypothesis}{Hypothesis}[section]
\newtheorem{corollary}{Corollary}[section]
\def\const {\hbox{const}}
\numberwithin{equation}{section}
\title{\bf On the construction of solutions of the Davey--Stewartson I equation using an open Toda chain}
\author{\bf I.T.~Habibullin, A.R.~Khakimova }
\date{} 
\begin{document}
\maketitle



\abstract {An effective method for constructing explicit solutions to the Davey--Stewartson type integrable equations is discussed based on the use of a dressing chain. The application of the method is exemplified by the equation DS I, for which a new class of explicit solutions is constructed, containing freedom in two arbitrary functions. In this case the generalized Toda lattice corresponding to the simple Lie algebra $A_2$ is  used as a dressing chain.} 

\vspace{0.5cm}

\textbf{Keywords:} {Integrable system, B\"acklund transformation, dressing chain, generalized symmetry, Lax pair}

\vspace{0.5cm}

\large

\section{Introduction}

The problem of constructing explicit particular solutions of the Davey--Stewartson I equation 
\begin{equation}\label{eq-DSI}
\begin{aligned}
&iu_t+\Delta u+uv=0,\\
&v_{\xi\eta}+\frac{1}{2}\Delta |u|^2=0
\end{aligned}
\end{equation}
is considered, where $\Delta$ stands for the Laplace operator $\frac{\partial^2}{\partial\xi^2}+\frac{\partial^2}{\partial\eta^2}$. For this purpose we use the concept of  dressing chain introduced years ago by A.B.~Shabat and R.I.~Yamilov in \cite{ShYAA91}, \cite{ShY97}.

The Davey--Stewartson equation has important applications in hydrodynamics and other areas of physics \cite{BenneyRoskes}, \cite{DaveyStewartson}, \cite{GrinevichSantini}. In fact, it is a spatially two-dimensional generalization of the well-known integrable nonlinear Schr\"odinger equation.  
Currently, interest in the class of (1+2)-dimensional Schr\"odinger type equations has increased significantly due to their important application in nonlinear optics (see \cite{MuradOmar}, \cite{SeadawyCheemaaBiswas}, \cite{Hosseini}, \cite{Bilal}). At the end of the eighties of the last century, soliton solutions of this kind equations were discovered (see \cite{BoitiLeonMartinaPempinelli}, \cite{BoitiLeonPempinelli}, \cite{FokasSantini89}, \cite{FokasSantini90}). Since then, the DS equation has been very actively studied by many authors (see, for instance, \cite{KonopelchenkoMatkarimov90}, \cite{Pogrebkov94}). In the last two decades, new effective approaches to the study of multidimensional integrable models of the DS type have been developed (see \cite{Pogrebkov21}, \cite{BogdanovKonopelchenko06}, \cite{FerapontovKhusnutdinovaPavlov05}, \cite{Taimanov21}, \cite{Kudryashov20} and references therein).

It was discovered in \cite{ShY97}, \cite{LShY93} that there is a deep relation between \eqref{eq-DSI} and the Toda lattice. More precisely, the two-dimensional Toda lattice  can be interpreted as an iteration of the B\"acklund transformation for a coupled system associated with the Davey--Stewartson equation (see {\eqref{sym-Toda}, \eqref{nonloc-Toda} below). In other words, the Toda lattice provides a dressing chain for the coupled system. 

Dressing chains of nonlinear integrable equations in $1+1$ dimensions are known as an effective tool for constructing particular solutions (see, e.g., \cite{VeselovShabat93}, \cite{DegasShabat94}). Dressing chains in $1+2$ dimensions, although discovered more than thirty years ago, have not found effective application for a long time due to problems with nonlocalities (see \cite{LShY93}). Some progress in this direction was made in our paper \cite{HabKhUMJ24}, where an effective way to overcome the problem of nonlocalities was proposed. The essence of our approach is that to construct a solution to an integrable partial differential equation we use not the entire infinite dressing chain, but only its part of finite length. In other words, we use a finite-field reduction of the chain obtained by imposing boundary conditions that preserve the integrability property. Later, in \cite{GarHabPhysD} this approach was successfully used to construct classes of explicit solutions to the well-known Ishimori equation.

In this paper, we adapted the reduced dressing chain method for a class of nonlinear equations of 1+2 dimensional NLS type. In the frame of this approach we   constructed new analytical solutions of the Davey--Stewartson equation I.

Let us briefly discuss how the paper is organized. In \S2 we recall the known fact  that Toda lattice, in suitable variables, is a dressing chain for a coupled system generalizing \eqref{eq-DSI} (see \cite{ShYAA91}, \cite{ShY97}). To find a solution to the coupled system, one can use the notion of Lax pair. However the standard Lax pairs of the coupled system and Toda lattice essentially differ, they are not consistent. Therefore, we are forced to first modify the Lax pair of the Toda lattice. In the third section we present a combined Lax pair for the dressing chain, which is compatible with the Lax pair for the coupled system. In \S4 some well-known results on open Toda lattices are briefly recalled. In \S5, explicit solutions of the generalized Toda lattice related to $A_2$ are presented and the eigenfunctions of one of the Lax operators are found. Here the problem of determining the dynamics on $t$ of explicit solutions of the chain $A_2$ is also solved. As a result, an explicit solution of the coupled system of nonlinear partial differential equations is found. In Section 6 we presented a new class of solutions to the DS I equation \eqref{eq-DSI} containing two arbitrary functions. In \S 7 we identified a specific representative of this class and plotted its graph.

\section{Problem statement} 

It is well known that the two-dimensional Toda lattice 
\begin{equation}\label{Toda}
q_{n,xy}=e^{q_{n+1}-q_n}-e^{q_n-q_{n-1}}
\end{equation}
admits a generalized symmetry  of the following form (see, for instance \cite{ShY97}, \cite{UTakasaki}):
\begin{equation}\label{sym-Toda}
\begin{aligned}
&i\hat{q}_{1,t}+\hat{q}_{1,\xi\xi}+\hat{q}_{1,\eta\eta}+\hat{q}_{1}(g_1+g_2)=0, \\
-&i\hat{q}_{2,t}+\hat{q}_{2,\xi\xi}+\hat{q}_{2,\eta\eta}+\hat{q}_{2}(g_1+g_2)=0,
\end{aligned}
\end{equation}
where functions $g_1$ and $g_2$, called non-local variables, are defined by the  equations
\begin{equation}\label{nonloc-Toda}
\frac{\partial}{\partial \xi}g_1=-\frac{1}{2}\frac{\partial}{\partial \eta}(\hat{q}_{1}\hat{q}_{2}), \qquad
\frac{\partial}{\partial \eta}g_2=-\frac{1}{2}\frac{\partial}{\partial \xi}(\hat{q}_{1}\hat{q}_{2}).
\end{equation}

In order to explain in more detail the relation between models \eqref{Toda} and \eqref{sym-Toda}, \eqref{nonloc-Toda}, we first clarify the connection between the independent variables: $\xi=-2x$, $\eta=-2y$ included in these models and the unknown functions
\begin{equation}\label{connection}
\hat{q}_{1}(n)=-e^{-q_n}, \qquad \hat{q}_{2}(n)=e^{q_{n+1}}.
\end{equation}
In new variables, the Toda lattice takes the form of system \eqref{sys-q1q2-xi-eta} (see below). Then, as shown in the articles \cite{ShY97}, \cite{UTakasaki}, the phase flows defined by two systems \eqref{sys-q1q2-xi-eta} and \eqref{sym-Toda}, \eqref{nonloc-Toda} commute.

Following the results of \cite{ShYAA91} and \cite{ShY97}, we interpret the Toda lattice as a symmetry of system \eqref{sym-Toda}, \eqref{nonloc-Toda} with discrete time $n$ (or, synonymously, as a dressing chain). This means that shift in $n$ and evolution in $t$ are permutable on the common solutions of both systems: \eqref{sys-q1q2-xi-eta} and \eqref{sym-Toda}, \eqref{nonloc-Toda}. Actually, the shift in $n$ performed according to the rule
$$(q_n,q_{n+1})\longmapsto (q_{n+1},q_{n+2})$$
defines a B\"acklund transformation that maps the solution $(\hat{q}_1(n),\hat{q}_2(n))$ of system \eqref{sym-Toda}, \eqref{nonloc-Toda} into its new solution $(\hat{q}_1(n+1),\hat{q}_2(n+1))$. Since a shift of the variable $n$ is naturally related to the Toda lattice, it is easy to obtain an explicit expression for this B\"acklund transformation:
\begin{equation}\label{backlund}
\begin{aligned}
&\hat{q}_{1}(n+1)=-\frac{1}{\hat{q}_{2}(n)}, \\
&\hat{q}_{2}(n+1)=4\hat{q}_{2,\xi\eta}(n)-4\frac{\hat{q}_{2,\xi}(n)\hat{q}_{2,\eta}(n)}{\hat{q}_{2}(n)}\hat{q}^2_{2}(n)\hat{q}_{1}(n)
\end{aligned}
\end{equation}
due to \eqref{Toda} and \eqref{connection}.
Nonlocal variables $g_1(n+1)$ and $g_2(n+1)$ corresponding to $(\hat{q}_1(n+1),\hat{q}_2(n+1))$ are calculated according to the rule\begin{equation}\label{nonloc-new}
\begin{aligned}
&g_1(n+1)=g_1(n)+2\frac{\partial^2}{\partial \eta^2}\ln \hat{q}_2(n), \\
&g_2(n+1)=g_2(n)+2\frac{\partial^2}{\partial \xi^2}\ln \hat{q}_2(n).
\end{aligned}
\end{equation}
Since transformation \eqref{backlund}, \eqref{nonloc-new} is given by a differential substitution, it is not invertible by itself. However, the transformation will be invertible if we keep in mind that function $q_n$ in \eqref{connection} satisfies the Toda lattice equation (see \cite{LShY93}). The inverse transformation is expressed as
\begin{align}
&\begin{aligned}\label{inverse-trans-q}
&\hat{q}_1(n-1)=-4\hat{q}_{1,\xi\eta}+4\frac{\hat{q}_{1,\xi}(n)\hat{q}_{1,\eta}(n)}{\hat{q}_{1}(n)}-\hat{q}^2_{1}(n)\hat{q}_{2}(n), \\
&\hat{q}_2(n-1)=-\frac{1}{\hat{q}_{1}(n)},
\end{aligned}\\
&\begin{aligned}\label{inverse-trans-g}
&g_1(n-1)=g_1(n)+2\frac{\partial^2}{\partial \eta^2}\ln \hat{q}_1(n), \\
&g_2(n-1)=g_2(n)+2\frac{\partial^2}{\partial \xi^2}\ln \hat{q}_1(n).
\end{aligned}
\end{align}
Thus the problem is to find a common solution of coupled system  \eqref{sym-Toda}, \eqref{nonloc-Toda} and its symmetry  with the discrete time (see \eqref{sys-q1q2-xi-eta} below). Recall that the standard symmetry method for finding a particular solution of an equation is based on use of the stationary part of its symmetry. However, this approach is inefficient here due to problems with non-local variables. Therefore, we use here another idea -- we take advantage of a finite-field reduction of the lattice compatible with the integrability property.

\section{Compatible Lax pairs for the coupled system and Toda lattice}

It is well known that  system \eqref{sym-Toda}, \eqref{nonloc-Toda} can be represented as a compatibility condition for the following system of linear equations (see, for instance, \cite{Flora})
\begin{align}
&\begin{aligned}\label{Lax-psi-varphi-t}
&i\psi_{1,t}=\psi_{1,\xi\xi}-\psi_{1,\eta\eta}-g_1\psi_1+\hat{q}_{1,\xi}\varphi_2, \\
&i\varphi_{2,t}=\varphi_{2,\xi\xi}-\varphi_{2,\eta\eta}+g_2\varphi_2-\hat{q}_{2,\eta}\psi_1,
\end{aligned}\\
&\begin{aligned}\label{Lax-psi-varphi-xi-eta}
\psi_{1,\xi}=-\frac{1}{2}\hat{q}_{1}\varphi_2, \qquad
\varphi_{2,\eta}=-\frac{1}{2}\hat{q}_{2}\psi_1.
\end{aligned}
\end{align}
Recall that the standard Lax pair of the Toda lattice is of the form:
\begin{align} \label{Lax-Toda-1}
\psi_{n,x}=-e^{q_{n+1}-q_n}\psi_{n+1}, \qquad \psi_{n,y}=-q_{n,y}\psi_n+\psi_{n-1}.
\end{align}
By applying the gauge transformation
\begin{align*}
\psi_n=e^{-q_n}\varphi_n
\end{align*}
one gets from \eqref{Lax-Toda-1} another Lax pair  for the Toda lattice
\begin{align} \label{Lax-Toda-2}
\psi_{n,x}=q_{n,x}\varphi_n-\varphi_{n+1}, \qquad \varphi_{n,y}=e^{q_n-q_{n-1}}\psi_{n-1}.
\end{align}
Both of these Lax pairs are not symmetric. Let us construct combined, in a sense symmetric, Lax pair for the Toda lattice by choosing the following set of functions
$$
\left\{\psi_{2k-1},\varphi_{2k}\right\}^{+\infty}_{k=-\infty}
$$
as the new set of eigenfunctions. 
We exclude functions $\psi_{2k}$ and $\varphi_{2k-1}$ from the linear equations \eqref{Lax-Toda-1}, \eqref{Lax-Toda-2}  by virtue of the equalities
\begin{align} \label{psi-2k-varphi-2k-1}
\psi_{2k}=e^{-q_{2k}}\varphi_{2k}, \qquad \varphi_{2k-1}=e^{q_{2k-1}}\psi_{2k-1}.
\end{align}
As a result, we arrive at the following overdetermined system of linear equations
\begin{align} 
&\psi_{2n-1,x}=-e^{-q_{2n-1}}\varphi_{2n}, \phantom{q_{2nx}\varphi_{2n}} \qquad \psi_{2n-1,y}=-q_{2n-1,y}\psi_{2n-1}+e^{-q_{2n-2}}\varphi_{2n-2}, \label{Lax-psi-xy} \\
&\varphi_{2n,x}=q_{2n,x}\varphi_{2n}-e^{q_{2n+1}}\psi_{2n+1}, \qquad \varphi_{2n,y}=e^{q_{2n}}\psi_{2n-1}. \label{Lax-varphi-xy} 
\end{align}
The compatibility condition for a pair of equations \eqref{Lax-psi-xy} is equivalent to the relation
\begin{align*} 
q_{2n-1,xy}=e^{q_{2n}-q_{2n-1}}-e^{q_{2n-1}-q_{2n-2}}
\end{align*}
while the compatibility condition of  \eqref{Lax-varphi-xy}  is expressed as
\begin{align*} 
q_{2n,xy}=e^{q_{2n+1}-q_{2n}}-e^{q_{2n}-q_{2n-1}}.
\end{align*}
In other words, \eqref{Lax-psi-xy}, \eqref{Lax-varphi-xy} define an alternative non-autonomous Lax pair for the Toda lattice. 

We make the following substitutions $\xi=-2x$, $\eta=-2y$
$$
\hat{q}_1(n)=-e^{-q_{2n-1}}, \qquad \hat{q}_2(n)=e^{q_{2n}}
$$
in the equations \eqref{Lax-psi-xy}, \eqref{Lax-varphi-xy}. As a result, we get
\begin{align} 
&\psi_{2n-1,\xi}=-\frac{1}{2}\hat{q}_1(n)\varphi_{2n}, \qquad \varphi_{2n,\eta}=-\frac{1}{2}\hat{q}_2(n)\psi_{2n-1}, \label{Lax-xi-eta-1}\\
&\begin{aligned} \label{Lax-xi-eta-2}
&\psi_{2n-1,\eta}=-\frac{\hat{q}_{1,\eta}(n)}{\hat{q}_{1}(n)}\psi_{2n-1}-\frac{1}{2}\hat{q}_{2}(n)\varphi_{2n-2},\\
&\varphi_{2n-2,\xi}=\frac{\hat{q}_{2,\xi}(n-1)}{\hat{q}_{2}(n-1)}\varphi_{2n-2}-\frac{1}{2}\hat{q}_{1}(n)\psi_{2n-1}.
\end{aligned}
\end{align}
The obtained Lax pair of the Toda lattice is in a complete agreement with the Lax pair \eqref{Lax-psi-varphi-t}, \eqref{Lax-psi-varphi-xi-eta} of the system 
\eqref{sym-Toda}, \eqref{nonloc-Toda}. Indeed, the system of equations \eqref{Lax-xi-eta-1} exactly coincides with the Lax equation \eqref{Lax-psi-varphi-xi-eta} for $n=1$.

Let us formally define the system of equations \eqref{Lax-psi-varphi-t} for all integer values of the variable  $n$, assuming that
\begin{align} 
\begin{aligned} \label{form-psi-phi-t}
&i\psi_{2n-1,t}=\psi_{2n-1,\xi\xi}-\psi_{2n-1,\eta\eta}-g_1(n)\psi_{2n-1}+\hat{q}_{1,\xi}(n)\varphi_{2n},\\
&i\varphi_{2n,t}=\varphi_{2n,\xi\xi}-\varphi_{2n,\eta\eta}+g_2(n)\varphi_{2n}-\hat{q}_{2,\eta}(n)\psi_{2n-1}.
\end{aligned}
\end{align}
As a result, we have three linear systems. Let us explain their meaning in the following two theorems.
\begin{theorem}
System of equations \eqref{Lax-xi-eta-1}, \eqref{Lax-xi-eta-2} is compatible if and only if the coefficients $\hat{q}_{1}(n)$, $\hat{q}_{2}(n)$ satisfy the following nonlinear equations
\begin{align} 
\begin{aligned} \label{sys-q1q2-xi-eta}
&\hat{q}_{1,\xi\eta}(n)=\frac{\hat{q}_{1,\xi}(n)\hat{q}_{1,\eta}(n)}{\hat{q}_{1}(n)}-\frac{1}{4}\hat{q}^2_{1}(n)\hat{q}_{2}(n)+\frac{1}{4\hat{q}_{2}(n-1)},\\
&\hat{q}_{2,\xi\eta}(n)=\frac{\hat{q}_{2,\xi}(n)\hat{q}_{2,\eta}(n)}{\hat{q}_{2}(n)}+\frac{1}{4}\hat{q}^2_{2}(n)\hat{q}_{1}(n)-\frac{1}{4\hat{q}_{1}(n+1)}.
\end{aligned}
\end{align}
\end{theorem}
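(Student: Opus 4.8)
The plan is to check the compatibility by straightforward cross‑differentiation. First I would note that, read for all integer $n$, the relations \eqref{Lax-xi-eta-1}, \eqref{Lax-xi-eta-2} furnish both first‑order derivatives of every eigenfunction in the family $\{\psi_{2k-1},\varphi_{2k}\}$: the $\xi$‑ and $\eta$‑derivatives of $\psi_{2n-1}$ are the first equations of \eqref{Lax-xi-eta-1} and \eqref{Lax-xi-eta-2}, respectively, while the $\eta$‑derivative of $\varphi_{2n}$ is the second equation of \eqref{Lax-xi-eta-1} and its $\xi$‑derivative is the second equation of \eqref{Lax-xi-eta-2} written with $n$ shifted to $n+1$. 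Thus the system takes the form $\Psi_\xi=U\Psi$, $\Psi_\eta=V\Psi$ for the infinite column $\Psi=(\dots,\varphi_{2n-2},\psi_{2n-1},\varphi_{2n},\dots)^{\top}$, and it is compatible precisely when the mixed derivatives $f_{\xi\eta}$ and $f_{\eta\xi}$ coincide for each entry $f$ of $\Psi$ (equivalently, when the zero‑curvature relation $U_\eta-V_\xi+[U,V]=0$ holds).

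I would then perform the check for $\psi_{2n-1}$: differentiate $\psi_{2n-1,\xi}$ (first equation of \eqref{Lax-xi-eta-1}) with respect to $\eta$ and $\psi_{2n-1,\eta}$ (first equation of \eqref{Lax-xi-eta-2}) with respect to $\xi$, and in both expressions eliminate the derivatives $\varphi_{2n,\eta}$, $\varphi_{2n-2,\xi}$ and $\psi_{2n-1,\xi}$ using \eqref{Lax-xi-eta-1}, \eqref{Lax-xi-eta-2} once more. The terms carrying the neighbouring eigenfunctions $\varphi_{2n}$ and $\varphi_{2n-2}$ cancel, and what survives is a single scalar identity multiplying $\psi_{2n-1}$; collecting terms in it gives exactly the first equation of \eqref{sys-q1q2-xi-eta}. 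The analogous computation for $\varphi_{2n}$ — differentiating $\varphi_{2n,\eta}$ in $\xi$ and $\varphi_{2n,\xi}$ in $\eta$, now using the equations with index $n+1$ to eliminate $\psi_{2n+1,\eta}$ — produces the second equation of \eqref{sys-q1q2-xi-eta}. Since each member of the family $\{\psi_{2k-1},\varphi_{2k}\}$ is taken care of by one of these two checks, the complete list of compatibility conditions is precisely the system \eqref{sys-q1q2-xi-eta}.

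The converse direction is the same computation run backwards: if $\hat q_1(n),\hat q_2(n)$ satisfy \eqref{sys-q1q2-xi-eta} for all $n$, the equalities of mixed second derivatives established above all hold, so the overdetermined system \eqref{Lax-xi-eta-1}, \eqref{Lax-xi-eta-2} is consistent. The only delicate point is the bookkeeping — keeping the index shifts $n\mapsto n\pm1$ and the signs straight, and, above all, verifying that the terms proportional to the adjacent eigenfunctions really do drop out instead of generating additional constraints. That cancellation is the heart of the argument; it reflects the fact that the coefficients in \eqref{Lax-xi-eta-2} are logarithmic derivatives of $\hat q_1$ and $\hat q_2$, which is exactly what makes the combined, ``symmetric'' choice of eigenfunctions $\{\psi_{2k-1},\varphi_{2k}\}$ reproduce a genuine (alternative) Lax pair rather than an inconsistent overdetermined system.
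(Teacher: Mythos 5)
Your proposal is correct and is essentially the paper's own argument: the authors establish compatibility by the same cross-differentiation (zero-curvature) check, only carried out for the precursor system \eqref{Lax-psi-xy}, \eqref{Lax-varphi-xy} in the original variables $q_n$, where it yields the Toda equations, after which the change of variables \eqref{connection} turns those into \eqref{sys-q1q2-xi-eta}. Performing the identical computation directly in the $\hat q_1,\hat q_2$ variables, as you do, is an equivalent bookkeeping choice, and your observation that the terms in the neighbouring eigenfunctions cancel because the diagonal coefficients are logarithmic derivatives is precisely the point that makes the combined Lax pair consistent.
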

Note that \eqref{sys-q1q2-xi-eta} is nothing else than the Toda lattice \eqref{Toda} rewritten using \eqref{connection}.

\begin{theorem}
System  \eqref{Lax-xi-eta-1}, \eqref{form-psi-phi-t} is consistent if and only if the following relations are valid:
\begin{align} 
&\begin{aligned} \label{sys-q1q2-t}
&i\hat{q}_{1,t}(n)+\hat{q}_{1,\xi\xi}(n)+\hat{q}_{1,\eta\eta}(n)-\hat{q}_{1}(n)(g_1(n)+g_2(n))=0,\\
-&i\hat{q}_{2,t}(n)+\hat{q}_{2,\xi\xi}(n)+\hat{q}_{2,\eta\eta}(n)+\hat{q}_{2}(n)(g_1(n)+g_2(n))=0,
\end{aligned}\\
&\begin{aligned} \label{sys-g1g2-xi-eta}
&\frac{\partial}{\partial \xi}g_1(n)=-\frac{1}{2}\frac{\partial}{\partial \eta}(\hat{q}_{1}(n)\hat{q}_{2}(n)), \\
&\frac{\partial}{\partial \eta}g_2(n)=-\frac{1}{2}\frac{\partial}{\partial \xi}(\hat{q}_{1}(n)\hat{q}_{2}(n)).
\end{aligned}
\end{align}
\end{theorem}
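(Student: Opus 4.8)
The plan is to read the statement off the compatibility conditions of the overdetermined linear system \eqref{Lax-xi-eta-1}, \eqref{form-psi-phi-t}, keeping in mind its peculiarity: the spatial part \eqref{Lax-xi-eta-1} prescribes only $\psi_{2n-1,\xi}$ and $\varphi_{2n,\eta}$ (the derivatives $\psi_{2n-1,\eta}$ and $\varphi_{2n,\xi}$ remain free and enter \eqref{Lax-xi-eta-2} rather than \eqref{Lax-xi-eta-1}), while the $t$-equations \eqref{form-psi-phi-t} are of second order in $\xi,\eta$. Hence the only genuinely restrictive mixed-derivative identities are
\[
\partial_t\big(\psi_{2n-1,\xi}\big)=\partial_\xi\big(\psi_{2n-1,t}\big),\qquad
\partial_t\big(\varphi_{2n,\eta}\big)=\partial_\eta\big(\varphi_{2n,t}\big);
\]
the $[\partial_\xi,\partial_\eta]$-condition is vacuous at a single level $n$ (it turns into \eqref{sys-q1q2-xi-eta} only after \eqref{Lax-xi-eta-2} is adjoined, which is exactly the content of the preceding theorem), and the remaining relations involving the undetermined derivatives $\psi_{2n-1,\eta}$, $\varphi_{2n,\xi}$ merely prescribe how those free quantities evolve and impose nothing on $\hat{q}_1(n),\hat{q}_2(n),g_1(n),g_2(n)$.

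I would start with the first identity. Differentiating $\psi_{2n-1,\xi}=-\tfrac12\hat{q}_1(n)\varphi_{2n}$ in $t$ and substituting $\varphi_{2n,t}$ from \eqref{form-psi-phi-t} gives one side; differentiating the first equation of \eqref{form-psi-phi-t} in $\xi$ gives the other. On the second side one meets $\psi_{2n-1,\xi\xi\xi}$ and $\psi_{2n-1,\xi\eta\eta}$, which I eliminate by differentiating \eqref{Lax-xi-eta-1} once or twice more (using $\psi_{2n-1,\xi}=-\tfrac12\hat{q}_1(n)\varphi_{2n}$ repeatedly, and $\varphi_{2n,\eta}=-\tfrac12\hat{q}_2(n)\psi_{2n-1}$ to remove $\varphi_{2n,\eta}$). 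The key point is that after these substitutions every term carrying $\varphi_{2n,\xi\xi}$, $\varphi_{2n,\eta\eta}$ or $\varphi_{2n,\xi}$ cancels identically, so the identity collapses to $A\,\psi_{2n-1}+B\,\varphi_{2n}=0$ with $A,B$ built from $\hat{q}_1(n),\hat{q}_2(n),g_1(n),g_2(n)$ and their derivatives. Since $\psi_{2n-1}$ and $\varphi_{2n}$ are functionally independent, this holds on all solutions if and only if $A=0$ and $B=0$; a direct inspection identifies $A=0$ with the first equation of \eqref{sys-g1g2-xi-eta} and $B=0$ with the first equation of \eqref{sys-q1q2-t}.

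The second identity is handled by the mirror computation: differentiate $\varphi_{2n,\eta}=-\tfrac12\hat{q}_2(n)\psi_{2n-1}$ in $t$, insert $\psi_{2n-1,t}$ from \eqref{form-psi-phi-t}, differentiate the second equation of \eqref{form-psi-phi-t} in $\eta$, and eliminate $\varphi_{2n,\eta\eta\eta}$, $\varphi_{2n,\xi\xi\eta}$ through \eqref{Lax-xi-eta-1}; this time it is the $\psi_{2n-1,\xi\xi}$, $\psi_{2n-1,\eta\eta}$, $\psi_{2n-1,\eta}$ terms that drop out, and the residual $\widetilde A\,\varphi_{2n}+\widetilde B\,\psi_{2n-1}=0$ yields the second equations of \eqref{sys-g1g2-xi-eta} and \eqref{sys-q1q2-t}. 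All the manipulations are reversible, so conversely \eqref{sys-q1q2-t} together with \eqref{sys-g1g2-xi-eta} guarantees that both cross-derivative identities hold, hence that \eqref{Lax-xi-eta-1}, \eqref{form-psi-phi-t} is consistent. For $n=1$ this reduces to the already known fact that \eqref{sym-Toda}, \eqref{nonloc-Toda} is the compatibility condition of \eqref{Lax-psi-varphi-t}, \eqref{Lax-psi-varphi-xi-eta}.

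The only real difficulty is the bookkeeping: one has to carry third-order spatial derivatives of the eigenfunctions and check, term by term, that all second-order-derivative contributions of the ``partner'' eigenfunction disappear. This cancellation is precisely where the specific form of the non-local potentials $g_1(n),g_2(n)$ and of the constraints \eqref{sys-g1g2-xi-eta} is forced; keeping track of the factors of $i$ and of the signs in \eqref{form-psi-phi-t} also has to be done with some care.
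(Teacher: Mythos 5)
Your proposal is correct and follows exactly the direct cross-differentiation argument that the paper relies on (the theorem is stated there without an explicit proof, as a standard compatibility computation extending the known $n=1$ case of \eqref{Lax-psi-varphi-t}, \eqref{Lax-psi-varphi-xi-eta}). The two identities $\partial_t\psi_{2n-1,\xi}=\partial_\xi\psi_{2n-1,t}$ and $\partial_t\varphi_{2n,\eta}=\partial_\eta\varphi_{2n,t}$ do collapse, after the eliminations you describe, to $A\,\psi_{2n-1}+B\,\varphi_{2n}=0$ with $A=0$, $B=0$ reproducing \eqref{sys-g1g2-xi-eta} and \eqref{sys-q1q2-t}, and your observation that the $[\partial_\xi,\partial_\eta]$ condition is vacuous at a single level $n$ is the right way to see why no further constraints arise.
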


Equations \eqref{sys-q1q2-t}, \eqref{sys-g1g2-xi-eta} define a family of solutions of system \eqref{sym-Toda}, \eqref{nonloc-Toda}, related by the B\"acklund transformation generated by lattice \eqref{Toda}. More precisely, we obtain a family of systems of the form \eqref{sym-Toda}, \eqref{nonloc-Toda}, numbered by the parameter $n$ so that the transition from one system to another (neighboring!) is carried out by means of an invertible B\"acklund transformation \eqref{backlund}, \eqref{nonloc-new} or \eqref{inverse-trans-q}, \eqref{inverse-trans-g}.

\section{Generalized Toda lattices}

It is well known that the Toda lattice admits finite-field reductions with enhanced integrability. These reductions are hyperbolic systems of exponential type whose general solution can be represented explicitly \cite{d15}. Such kind systems are called generalized Toda lattices and are closely related to the Cartan matrices of simple Lie algebras. A discussion of the history of the problem and a detailed exposition of the classical results of Darboux, Goursat, Moutard, and others on this topic can be found in the excellent survey \cite{GTs}.

For example, the generalized Toda lattice corresponding to a simple Lie algebra  $A_N$ is a reduction of the Toda lattice obtained by imposing boundary conditions of the form
\begin{equation}\label{bound-cond}
e^{-q_0}=0, \qquad e^{q_{N+1}}=0.
\end{equation}
It should be noted that the cut-off constraints \eqref{bound-cond} are compatible with higher symmetries of the Toda lattice (see \cite{GurelHab97}), in particular with system \eqref{sys-q1q2-t}, \eqref{sys-g1g2-xi-eta}. Here compatibility is understood in the sense that system \eqref{sys-q1q2-t}, \eqref{sys-g1g2-xi-eta} under the conditions \eqref{bound-cond} turns into a system of equations defined for all $n\in\left[1,N\right]$, commuting with the generalized Toda lattice corresponding to the Cartan matrix of the algebra $A_N$. These two circumstances: existence of explicit formulas expressing the general solution of the generalized Toda lattice and commutativity of two flows under consideration can be used to construct particular solutions of system \eqref{sym-Toda}, \eqref{nonloc-Toda}. To do this, it is sufficient to determine the dependence of explicit solutions of the generalized Toda lattice on time $t$. Below, we illustrate this algorithm for solving  system \eqref{sym-Toda}, \eqref{nonloc-Toda} with an example of the generalized Toda lattice corresponding to the Lie algebra $A_2$.

\section{Solution of the generalized Toda lattice corresponding to the algebra $A_2$}

Let us consider the generalized Toda lattice corresponding to $A_2$
\begin{equation}\label{gen-Toda}
q_{1,xy}=e^{q_2-q_1}, \qquad q_{2,xy}=-e^{q_2-q_1}.
\end{equation}
It is obtained from the Toda lattice by means of truncation conditions
\begin{equation}\label{bound-cond-GT}
e^{-q_j}=0 \quad \mbox{for} \quad j\leqslant 0, \qquad e^{q_j}=0 \quad \mbox{for} \quad j\geqslant 3.
\end{equation}
Various methods for constructing solutions of open chains are known in the literature (see, for example, \cite{GTs}, \cite{LeznovSavbook}, \cite{d10}, \cite{d25}). We will derive general solution to \eqref{gen-Toda} using only elementary reasoning. In doing so, together with the solution of \eqref{gen-Toda}, we will find explicit expressions for the eigenfunctions of the Lax operators.

Let us first derive a Lax pair for \eqref{gen-Toda}. To this end we impose cut-off constraints of the form 
$$
\psi_n\equiv 0 \quad \mbox{for} \quad n\neq 1 \quad \mbox{and} \quad n\neq 2
$$
on system \eqref{Lax-Toda-1}. It leads together with \eqref{bound-cond-GT} to the following Lax pair for \eqref{gen-Toda}:
\begin{equation} 
\begin{aligned} \label{Lax-GT}
&\psi_{1,x}=-e^{q_2-q_1}\psi_2, \qquad \psi_{1,y}=-q_{1,y}\psi_1,\\
&\psi_{2,x}=0, \phantom{e^{q_2-q_1}\psi_2,} \qquad \psi_{2,y}=-q_{2,y}\psi_2+\psi_1.
\end{aligned}
\end{equation}
Now we will show that the resulting system of equations can be integrated explicitly. Let us first exclude the difference of variables $q_1$, $q_2$ and their derivatives from  \eqref{Lax-GT}
\begin{equation}\label{Lax-GT'}
q_2-q_1=\ln \frac{-\psi_{1,x}}{\psi_2}, \qquad q_{2,y}=\frac{\psi_1}{\psi_2}-\left(\ln \psi_2\right)_y, \qquad q_{1,y}=-\left(\ln \psi_1\right)_y.
\end{equation}
Using \eqref{Lax-GT'} it is easy to obtain a system of equations for eigenfunctions
\begin{equation*}
\frac{\psi_{1,xy}}{\psi_{1,x}}=\frac{\psi_{1,y}}{\psi_1}+\frac{\psi_1}{\psi_2}, \qquad \psi_{2,x}=0.
\end{equation*}
By substitution $\psi_1=e^u$ we reduce last system to the form
\begin{equation*}
u_{xy}=\frac{u_xe^u}{\psi_2}, \qquad \psi_2=\psi_2(y).
\end{equation*}
Now integrating both parts of the resulting equation with respect to $x$, we find
\begin{equation*}
u_y=e^u\frac{1}{\psi_2(y)}+S(y),
\end{equation*}
where $S(y)$ is an arbitrary function. Next, assuming $z=e^{-u}$ we reduce the latter to a linear inhomogeneous equation of the form
\begin{equation}\label{zy}
z_y=-zS(y)-\frac{1}{\psi_2(y)}.
\end{equation}
Let us introduce a new function $\delta(y)$ by setting $\delta'(y)=S(y)$ and make the following change of variables in \eqref{zy}: $z=Ce^{-\delta(y)}$. As a result, we obtain
\begin{equation*}
\frac{\partial}{\partial y}C=-\frac{1}{\psi_2(y)}e^{\delta(y)}.
\end{equation*}
Thus obviously we have $\frac{\partial^2}{\partial x\partial y}C=0$, therefore
\begin{equation}\label{C}
C(x,y)=\theta(y)+\rho(x).
\end{equation}
Here $\theta(y)$ and $\rho(x)$ are arbitrary smooth functions, such that $\theta'(y)=-\frac{e^{\delta(y)}}{\psi_2(y)}$. This fact obviously implies explicit representations for $\psi_2(y)$ and $z$
\begin{equation}\label{psi-2}
\psi_2(y)=-\frac{e^{\delta(y)}}{\theta'(y)}, \qquad z=(\theta(y)+\rho(x))e^{-\delta(y)}.
\end{equation}
Afterwards, we use relation $\psi_1=e^u=\frac{1}{z}$ and get
\begin{equation}\label{psi-1}
\psi_1=\frac{e^{\delta(y)}}{\theta(y)+\rho(x)}.  
\end{equation}
From equations \eqref{Lax-GT'}, \eqref{psi-2} and  \eqref{psi-1}, one can easily  obtain explicit representations for the remaining desired functions
\begin{equation}\label{parametrization}
\begin{aligned}
&\hat{q}_1=-e^{-q_1}=-\frac{e^{\delta(y)-\alpha(x)}}{\theta(y)+\rho(x)},\\
&\hat{q}_2=e^{q_2}=-\frac{e^{-\delta(y)+\alpha(x)}\theta'(y)\rho'(x)}{\theta(y)+\rho(x)},\\
&\varphi_2=e^{q_2}\psi_2=\frac{e^{\alpha(x)}\rho'(x)}{\theta(y)+\rho(x)},
\end{aligned}
\end{equation}
where $\delta(y)$, $\theta(y)$, $\alpha(x)$, $\rho(x)$ are arbitrary functions. Below we define the dynamics of these four functions with respect to $t$ in a suitable way. The dynamics should ensure that $\hat{q}_1$, $\hat{q}_2$, $g_1$, $g_2$ satisfy \eqref{sym-Toda}, \eqref{nonloc-Toda}.

\subsection{Finding out the dependence of functional parameters on time $t$} 

The next step of our algorithm of finding a solution to  \eqref{sym-Toda}, \eqref{nonloc-Toda}  is to determine the dependence of the functional parameters $\delta(y)$, $\theta(y)$, $\alpha(x)$, $\rho(x)$ on time. For this purpose, we use the system of equations
\begin{equation}\label{eq-t}
\begin{aligned}
&i\psi_{1,t}=\psi_{1,\xi\xi}-\psi_{1,\eta\eta}-g_1\psi_1+\hat{q}_{1,\xi}\varphi_2,\\
&i\varphi_{2,t}=\varphi_{2,\xi\xi}-\varphi_{2,\eta\eta}+g_2\varphi_2-\hat{q}_{2,\eta}\psi_1,\\
&\hat{q}_{1,t}=i\left(\hat{q}_{1,\xi\xi}+\hat{q}_{1,\eta\eta}\right)+i\hat{q}_{1}(g_1+g_2),\\
&\hat{q}_{2,t}=-i\left(\hat{q}_{2,\xi\xi}+\hat{q}_{2,\eta\eta}\right)-i\hat{q}_{2}(g_1+g_2)
\end{aligned}
\end{equation}
that follows from the dynamical system itself and its Lax pair.

Let us rewrite the formulas above (see \eqref{psi-1}, \eqref{parametrization}) that define a parametrization of the unknown functions in terms of the appropriate variables $\xi=-2x$, $\eta=-2y$
\begin{equation}\label{sol-xi-eta}
\begin{aligned}
&\psi_1=\frac{e^{\bar{\delta}(\eta)}}{\bar{\theta}(\eta)+\bar{\rho}(\xi)}, \qquad \varphi_2=-2\frac{e^{\bar{\alpha}(\xi)}\bar\rho'(\xi)}{\bar{\theta}(\eta)+\bar{\rho}(\xi)},\\
&\hat{q}_1=-\frac{e^{\bar{\delta}(\eta)-\bar{\alpha}(\xi)}}{\bar{\theta}(\eta)+\bar{\rho}(\xi)},\qquad \hat{q}_2=-4\frac{e^{-\bar{\delta}(\eta)+\bar{\alpha}(\xi)}\bar{\theta}'(\eta)\bar{\rho}'(\xi)}{\bar{\theta}(\eta)+\bar{\rho}(\xi)},
\end{aligned}
\end{equation}
here $\bar \delta(\eta)=\delta(y)$, $\bar \theta(\eta)=\theta(y)$, $\bar{\alpha}(\xi)=\alpha(x)$, $\bar{\rho}(\xi)=\rho(x)$.

Note that in the example under consideration the equations for nonlocalities
\begin{equation*}
\frac{\partial}{\partial \xi}g_1=-\frac{1}{2}\frac{\partial}{\partial \eta}(\hat{q}_{1}\hat{q}_{2}), \qquad
\frac{\partial}{\partial \eta}g_2=-\frac{1}{2}\frac{\partial}{\partial \xi}(\hat{q}_{1}\hat{q}_{2})
\end{equation*}
can be integrated explicitly. Assuming the integration constants to be  zero, we obtain
\begin{equation}\label{sol-g1g2-xi-eta}
\begin{aligned}
&g_1=\frac{2\bar{\theta}''(\eta)}{\bar{\theta}(\eta)+\bar{\rho}(\xi)}-\frac{2(\bar{\theta}'(\eta))^2}{(\bar{\theta}(\eta)+\bar{\rho}(\xi))^2},\\
&g_2=\frac{2\rho''(\xi)}{\bar{\theta}(\eta)+\bar{\rho}(\xi)}-\frac{2(\rho'(\xi))^2}{(\bar{\theta}(\eta)+\bar{\rho}(\xi))^2}.
\end{aligned}
\end{equation}
Let us substitute explicit expressions for the functions
$\psi_1$, $\varphi_2$, $\hat{q}_1$, $\hat{q}_2$, $g_1$, $g_2$ into \eqref{eq-t}. As a result, we obtain a system of equations, which after some simplifications takes the following form (for convenience, in the equations below we omit the bar over the letter):
\begin{equation}\label{eq-coeff-t}
\begin{aligned}
&1) \quad \alpha_t=i\alpha_{\xi\xi}-i\alpha^2_\xi,\\
&2) \quad \delta_t=i\delta_{\eta\eta}+i\delta^2_\eta+s_1(t),\\
&3) \quad \rho_t=-i\rho_{\xi\xi}-2i\alpha_\xi\rho_\xi+s_1(t)\rho+s_2(t),\\
&4) \quad \theta_t=-i\theta_{\eta\eta}+2i\delta_\eta \theta_\eta+s_1(t)\theta-s_2(t),
\end{aligned}
\end{equation}
where $s_1(t)$ and $s_2(t)$ are arbitrary functions.

An elementary analysis of the obtained equations convinces us that the following statement is true.

\begin{theorem}\label{Th3}
System \eqref{eq-coeff-t} is linearized by simple Cole--Hopf type substitutions and is reduced to a system of four heat  equations with imaginary time
\begin{equation}\label{beta-t}
\beta_{1,t}=i\beta_{1,\xi\xi}, \qquad \beta_{2,t}=i\beta_{2,\eta\eta}, \qquad \beta_{3,t}=-i\beta_{3,\xi\xi}, \qquad \beta_{4,t}=-i\beta_{4,\eta\eta}.
\end{equation}
\end{theorem}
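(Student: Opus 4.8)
The four equations in \eqref{eq-coeff-t} fall into two analogous blocks: $\{1),3)\}$, which involve only $\xi$, and $\{2),4)\}$, which involve only $\eta$; they are handled by one and the same recipe, with $\alpha\leftrightarrow\delta$, $\rho\leftrightarrow\theta$, $\xi\leftrightarrow\eta$ and $i\mapsto -i$. I would start by disposing of the arbitrary function $s_1(t)$. Choose $\sigma_1(t)$ with $\sigma_1'=s_1$ and perform the invertible rescaling $\delta\mapsto\delta-\sigma_1$, $\rho\mapsto e^{-\sigma_1}\rho$, $\theta\mapsto e^{-\sigma_1}\theta$; equation 1) is unaffected, $s_1$ drops out of 2)--4), and only the other free function $s_2$ gets renamed. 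I keep the same letters below and assume $s_1\equiv 0$.

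Equation 1) is linearised by the Cole--Hopf type change $\beta_1:=e^{-\alpha}$. Indeed $\beta_{1,t}=-\alpha_t\beta_1$ and $\beta_{1,\xi\xi}=(\alpha_\xi^2-\alpha_{\xi\xi})\beta_1$, hence $\beta_{1,t}-i\beta_{1,\xi\xi}=-\bigl(\alpha_t-i\alpha_{\xi\xi}+i\alpha_\xi^2\bigr)\beta_1$, and the bracket vanishes identically on solutions of 1). The same computation with $\beta_2:=e^{\delta}$ turns 2) into $\beta_{2,t}=i\beta_{2,\eta\eta}$; the exponent is now $+1$ because the quadratic term of 2) is $+i\delta_\eta^2$, opposite in sign to $-i\alpha_\xi^2$ in 1).

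Equations 3) and 4) are \emph{linear} in $\rho$, resp.\ $\theta$, with coefficients built from the already linearised $\alpha$, $\delta$. I would differentiate 3) once in $\xi$: this annihilates the inhomogeneity $s_2(t)$ and yields, for $p:=\rho_\xi$, the equation $p_t=-ip_{\xi\xi}-2i\alpha_{\xi\xi}\,p-2i\alpha_\xi\,p_\xi$. Then I set $\beta_3:=e^{\alpha}p=e^{\alpha}\rho_\xi$, i.e.\ substitute $p=e^{-\alpha}\beta_3$ and collect terms: the coefficient of $\beta_{3,\xi}$ cancels identically, while the coefficient of $\beta_3$ comes out equal to $\alpha_t-i\alpha_{\xi\xi}+i\alpha_\xi^2$, which is zero by 1). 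Thus $\beta_{3,t}=-i\beta_{3,\xi\xi}$. Equation 4) is treated verbatim with $\eta$ in place of $\xi$: differentiating in $\eta$ removes $s_2(t)$, and $\beta_4:=e^{-\delta}\theta_\eta$ satisfies $\beta_{4,t}=-i\beta_{4,\eta\eta}$, all the $\delta$-dependent contributions collapsing onto the left-hand side of 2). Undoing the preliminary rescaling, the substitutions in terms of the original parameters read
\begin{equation*}
\beta_1=e^{-\alpha},\qquad \beta_2=e^{\delta-\sigma_1},\qquad \beta_3=e^{\alpha-\sigma_1}\rho_\xi,\qquad \beta_4=e^{-\delta}\theta_\eta .
\end{equation*}

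The step I expect to be the real obstacle is 3)--4). They cannot be reduced to the \emph{pure} heat equation by a naive ansatz $\rho=A(\xi,t)\beta_3+B(\xi,t)$: cancelling the first-order term forces $A\propto e^{-\alpha}$, after which the surviving zero-order potential still contains $\alpha_{\xi\xi}$ and hence depends genuinely on $\xi$, so no ($\xi$-independent) further multiplier of $\beta_3$ can remove it. The resolution is exactly the two-step manoeuvre used above: one extra differentiation (which kills $s_2$ but temporarily produces the term $-2i\alpha_{\xi\xi}p$) together with the exponential factor $e^{-\alpha}$, which makes all $\alpha$-dependent terms telescope into $\alpha_t-i\alpha_{\xi\xi}+i\alpha_\xi^2=0$. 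All remaining verifications are routine chain-rule computations.
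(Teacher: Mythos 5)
Your proof is correct and follows essentially the same route as the paper's: exponential (Cole--Hopf) substitutions $\beta_1=e^{-\alpha}$, $\beta_2\propto e^{\delta}$ for the Riccati-type equations 1) and 2), then differentiation in $\xi$ (resp.\ $\eta$) to eliminate $s_2(t)$ followed by the product substitutions $\rho_\xi=\beta_1\beta_3 e^{\int s_1}$, $\theta_\eta=\beta_2\beta_4 e^{\int s_1}$ for 3) and 4); your preliminary rescaling by $e^{-\sigma_1}$ is just the paper's $e^{\int^t s_1(\tau)\,d\tau}$ factors absorbed at the outset, and your final substitution list coincides with the paper's.
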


\begin{proof}
Let us begin with the first equation in \eqref{eq-coeff-t}. Evidently substitution $\alpha=-\ln \beta_1$  reduces it to the form $\beta_{1,t}=i\beta_{1,\xi\xi}$.
The second equation is reduced to the form
$$\chi_t=i\chi_{\eta\eta}+s_1(t)\chi$$
by a similar transformation $\delta=\ln \chi$.
Afterwards, setting
$\chi=\beta_2e^{\int\limits^{t}_{t_0}s_1(\tau) d \tau},$
we obtain the desired form
$\beta_{2,t}=i\beta_{2,\eta\eta}.$

Let us proceed with the third equation.  First, we differentiate it with respect to $\xi$, and then we make the substitution $\rho_\xi=\sigma\beta_1$, where $\beta_1$ was already defined in the previous step. Then we simplify by means of the relation $\beta_{1,t}=i\beta_{1,\xi\xi}$ and get 
$$\sigma_t=-i\sigma_{\xi\xi}+s_1(t)\sigma.$$
The latter is reduced to the heat equation 
$\beta_{3,t}=-i\beta_{3,\xi\xi}$
by point transformation $$\sigma=\beta_3e^{\int\limits^{t}_{t_0}s_1(\tau) d \tau}.$$

In a similar way we investigate  $4)$. We differentiate it with respect to $\eta$ and make a substitution $\theta_{\eta}=\chi\beta_4,$ $\chi=e^\delta$. Then due to relation $\chi_t=i\chi_{\eta\eta}+s_1(t)\chi$ we obtain that 
$$\beta_{4,t}=-i\beta_{4,\eta\eta}.$$
Theorem \ref{Th3} is proved.
\end{proof}

Now it remains to find a parametrization of the desired functions in terms of the functions $\beta_1$--$\beta_4$. Obviously, it follows from the  reasoning above that
\begin{equation}\label{alpha-H}
\alpha=-\ln \beta_1, \qquad \delta=\ln \beta_2+\int\limits^{t}_{0}s_1(\tau) d \tau.
\end{equation}
Here for the simplicity we assume that $s_1(0)=0$. Let us solve equations $\rho_\xi=\sigma\beta_1$ and $\theta_\eta=\chi\beta_4$, assuming that $\rho(0,t)=A(t)$, $\theta(0,t)=B(t)$. As a result we find
\begin{equation}\label{sol-rho-w}
\begin{aligned}
&\rho(\xi,t)=e^{\int\limits^{t}_{0}s_1(\tau) d \tau}\left(\int\limits^{\xi}_{0}\beta_1(\zeta,t)\beta_3(\zeta,t) d \zeta+A(t)\right),\\
&\theta(\eta,t)=e^{\int\limits^{t}_{0}s_1(\tau) d \tau}\left(\int\limits^{\eta}_{0}\beta_2(\nu,t)\beta_4(\nu,t) d \nu+B(t)\right).
\end{aligned}
\end{equation}
We substitute the obtained relations into \eqref{sol-xi-eta} and get
\begin{equation}\label{sol-xi-eta-t}
\hat{q}_1=-\frac{\beta_1(\xi,t)\beta_2(\eta,t)}{R},\qquad \hat{q}_2=-\frac{4\beta_3(\xi,t)\beta_4(\eta,t)}{R}.
\end{equation}
Here
\begin{align*}
R=\int\limits^{\xi}_{0}\beta_1(\zeta,t)\beta_3(\zeta,t) d \zeta+\int\limits^{\eta}_{0}\beta_2(\nu,t)\beta_4(\nu,t) d \nu+C(t),
\end{align*}
where $C(t)=A(t)+B(t)$. For the non-local variables we have explicit relations
\begin{equation}\label{sol-g1g2-xi-eta-t}
\begin{aligned}
&g_1=\frac{2(\beta_2(\eta,t)\beta_4(\eta,t))_{\eta}}{R}-\frac{2(\beta_2(\eta,t)\beta_4(\eta,t))^2}{(R)^2},\\
&g_2=\frac{2(\beta_1(\xi,t)\beta_3(\xi,t))_{\xi}}{R}-\frac{2(\beta_1(\xi,t)\beta_3(\xi,t))^2}{(R)^2}.
\end{aligned}
\end{equation}
Let us summarize the discussion in this section with the following statement, which is easily proved by a direct computation.
\begin{theorem} Functions $\hat{q}_1$, $\hat{q}_2$, $g_1$ and $g_2$ defined by equalities \eqref{sol-xi-eta-t}, \eqref{sol-g1g2-xi-eta-t} determine a solution of system \eqref{sym-Toda}, \eqref{nonloc-Toda} if and only if 
$$\frac{\partial}{\partial t}C(t)=0.$$
Note that the solution depends on four arbitrary functions $\beta_1(\xi,t)$, $\beta_2(\eta,t)$, $\beta_3(\xi,t)$, $\beta_4(\eta,t)$, which satisfy the heat  equations \eqref{beta-t}. 
\end{theorem}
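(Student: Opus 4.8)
The plan is to substitute the closed-form expressions \eqref{sol-xi-eta-t}, \eqref{sol-g1g2-xi-eta-t} into the coupled system \eqref{sym-Toda}, \eqref{nonloc-Toda} and to reduce everything with the help of the heat equations \eqref{beta-t}, treating the two halves of the system separately. The nonlocality relations \eqref{nonloc-Toda} involve only $\xi,\eta$ and contain no information about the $t$-dependence: writing $\hat q_1\hat q_2=4\,\beta_1\beta_3\,\beta_2\beta_4/R^2$ and recalling that $R_\xi=\beta_1\beta_3$, $R_\eta=\beta_2\beta_4$, a one-line differentiation shows $-\tfrac12\partial_\eta(\hat q_1\hat q_2)=\partial_\xi g_1$ and $-\tfrac12\partial_\xi(\hat q_1\hat q_2)=\partial_\eta g_2$ with $g_1,g_2$ as in \eqref{sol-g1g2-xi-eta-t}; since this is exactly how those formulas were produced, \eqref{nonloc-Toda} holds unconditionally.

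For \eqref{sym-Toda} I would first use \eqref{beta-t} to rewrite the $t$-derivatives. Because $\hat q_1=-\beta_1\beta_2/R$ with $\beta_{1,t}=i\beta_{1,\xi\xi}$, $\beta_{2,t}=i\beta_{2,\eta\eta}$, the term $i\hat q_{1,t}$ produces a contribution exactly opposite to the second-order part $-(\beta_{1,\xi\xi}\beta_2+\beta_1\beta_{2,\eta\eta})/R$ of $\Delta\hat q_1$ (the part in which both derivatives act on the numerator), plus a term $i\beta_1\beta_2 R_t/R^2$; the same cancellation occurs in the second equation, where the signs $\beta_{3,t}=-i\beta_{3,\xi\xi}$, $\beta_{4,t}=-i\beta_{4,\eta\eta}$ are matched to $-i\hat q_{2,t}$. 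Next I would compute $R_t$: differentiating $R_\xi=\beta_1\beta_3$ in $t$ and using the heat equations gives $R_{\xi t}=(\beta_1\beta_3)_t=i\,\partial_\xi(\beta_{1,\xi}\beta_3-\beta_1\beta_{3,\xi})$, and similarly in $\eta$; integrating, $R_t=i(\beta_{1,\xi}\beta_3-\beta_1\beta_{3,\xi})+i(\beta_{2,\eta}\beta_4-\beta_2\beta_{4,\eta})+\Phi(t)$, where $\Phi$ is a function of $t$ only, tied to $C'(t)$ because $C(t)$ is the sole $t$-dependent function entering $R$.

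The rest is a mechanical collection of terms with denominators $R$, $R^2$, $R^3$, and this bookkeeping is the only real obstacle I anticipate. The $1/R$ terms cancel by the previous paragraph; the $1/R^3$ terms cancel because $\hat q_1(g_1+g_2)$ supplies precisely $+2\beta_1\beta_2(\beta_1^2\beta_3^2+\beta_2^2\beta_4^2)/R^3$, which kills the cross terms produced by $\Delta\hat q_1$ (and symmetrically for the $\hat q_2$ equation); and after these cancellations every surviving $1/R^2$ term reassembles, for the first equation, into $\frac{\beta_1\beta_2}{R^2}\bigl[\,iR_t+(\beta_{1,\xi}\beta_3-\beta_1\beta_{3,\xi})+(\beta_{2,\eta}\beta_4-\beta_2\beta_{4,\eta})\,\bigr]$, and for the second into the same bracket times $-4\beta_3\beta_4/R^2$. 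Substituting the expression for $R_t$, the bracket collapses to $i\,\Phi(t)$, a function of $t$ alone which by the above remark is a constant multiple of $\tfrac{d}{dt}C(t)$. Since $\beta_1\beta_2$ and $\beta_3\beta_4$ are not identically zero for a nontrivial solution, the factors $\beta_1\beta_2/R^2$ and $\beta_3\beta_4/R^2$ are not identically zero, so both equations of \eqref{sym-Toda} hold if and only if $\tfrac{d}{dt}C(t)=0$, which is the assertion. The only point requiring genuine care is the treatment of $R_t$: one must verify that, modulo the heat equations, its $(\xi,\eta)$-dependent part matches and cancels all remaining contributions, so that the obstruction is exactly $\tfrac{d}{dt}C(t)$ and nothing more.
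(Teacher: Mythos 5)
Your overall strategy coincides with the paper's: the paper offers no written argument beyond the phrase ``easily proved by a direct computation,'' and a direct substitution of \eqref{sol-xi-eta-t}, \eqref{sol-g1g2-xi-eta-t} into \eqref{sym-Toda}, \eqref{nonloc-Toda} reduced by \eqref{beta-t} is indeed the intended proof. Your bookkeeping is correct as far as it goes: \eqref{nonloc-Toda} holds identically because $R_\xi=\beta_1\beta_3$, $R_\eta=\beta_2\beta_4$ and $R_{\xi\eta}=0$; in \eqref{sym-Toda} the $1/R$ and $1/R^3$ contributions cancel exactly as you describe, and both equations collapse to the single bracket $iR_t+(\beta_{1,\xi}\beta_3-\beta_1\beta_{3,\xi})+(\beta_{2,\eta}\beta_4-\beta_2\beta_{4,\eta})$ multiplied by a generically nonzero factor.

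The gap sits precisely at the step you yourself flag as delicate. Integrating $R_{\xi t}=i\,\partial_\xi(\beta_{1,\xi}\beta_3-\beta_1\beta_{3,\xi})$ and $R_{\eta t}=i\,\partial_\eta(\beta_{2,\eta}\beta_4-\beta_2\beta_{4,\eta})$ from the lower limits $0$ that appear in $R$ gives $R_t=i(\beta_{1,\xi}\beta_3-\beta_1\beta_{3,\xi})+i(\beta_{2,\eta}\beta_4-\beta_2\beta_{4,\eta})+\Phi(t)$ with $\Phi(t)=C'(t)-i(\beta_{1,\xi}\beta_3-\beta_1\beta_{3,\xi})\big|_{\xi=0}-i(\beta_{2,\eta}\beta_4-\beta_2\beta_{4,\eta})\big|_{\eta=0}$. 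Your remark that ``$C(t)$ is the sole $t$-dependent function entering $R$'' is false: the integrands $\beta_1\beta_3$ and $\beta_2\beta_4$ depend on $t$, and the two subtracted Wronskians evaluated at the base point are in general nonzero functions of $t$. Hence the obstruction is $\Phi(t)$, not a constant multiple of $C'(t)$, and the ``if and only if'' you conclude does not follow. Concretely, $\beta_1=e^{ik\xi-ik^2t}$, $\beta_2=\beta_3=\beta_4=1$ satisfy \eqref{beta-t}, yet the bracket equals $i\bigl(C'(t)+k e^{-ik^2t}\bigr)$, which vanishes only for a non-constant $C$. To close the argument you must either state the criterion as $\Phi(t)=0$, i.e.\ $C'(t)=i(\beta_{1,\xi}\beta_3-\beta_1\beta_{3,\xi})\big|_{\xi=0}+i(\beta_{2,\eta}\beta_4-\beta_2\beta_{4,\eta})\big|_{\eta=0}$, or add the hypothesis that these Wronskians vanish at $\xi=0$, $\eta=0$ (true for the even Gaussian data of Section~7, and under the reduction \eqref{beta-complex} equivalent to $\mathrm{Im}(\beta_{1,\xi}\beta_1^{*})|_{\xi=0}=\mathrm{Im}(\beta_{2,\eta}\beta_2^{*})|_{\eta=0}=0$). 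This imprecision is arguably already present in the theorem as printed, but your computation, as written, does not detect it.
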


\section{Solution of the Davey--Stewartson I equation}

We will show that the above solution, with an appropriate choice of functional parameters, admits a complex-conjugate reduction that reduces the solution to a solution of the Davey--Stewartson I equation.

In order to make the formulas above more symmetrical, we introduce new notations
$$\beta_3=\frac{1}{2}\tilde{\beta}_3, \qquad \beta_4=\frac{1}{2}\tilde{\beta}_4.$$
Then formulas \eqref{sol-xi-eta-t}, \eqref{sol-g1g2-xi-eta-t} take the form
\begin{equation}\label{sol-DS}
\begin{aligned}
&\hat{q}_1=-\frac{\beta_1\beta_2}{R}, \qquad \hat{q}_2=-\frac{\tilde{\beta}_3\tilde{\beta}_4}{R},\\
&R=\int\limits^{\xi}_{0}\frac{\beta_1\tilde{\beta}_3}{2} d \zeta+\int\limits^{\eta}_{0}\frac{\beta_2\tilde{\beta}_4}{2} d \nu+C,\\
&g_1=\frac{(\beta_2\tilde{\beta}_4)_{\eta}}{R}-\frac{(\beta_2\tilde{\beta}_4)^2}{2(R)^2},\\
&g_2=\frac{(\beta_1\tilde{\beta}_3)_{\xi}}{R}-\frac{(\beta_1\tilde{\beta}_3)^2}{2(R)^2}.
\end{aligned}
\end{equation}
It is  clear now that the formulas above are consistent with the reduction 
\begin{equation}\label{beta-complex}
\tilde{\beta}_3=\beta^*_1, \qquad \tilde{\beta}_4=\beta^*_2, \qquad C^*=C,
\end{equation}
where the star denotes complex conjugation. It is assumed that $C=\const$. In this case we have
\begin{equation}\label{R}
R=\int\limits^{\xi}_{0}\frac{|\beta_1(\zeta,t)|^2}{2} d \zeta+\int\limits^{\eta}_{0}\frac{|\beta_2(\nu,t)|^2}{2} d \nu+C=R^*.
\end{equation}
Therefore the relations hold $\hat{q}^*_1=\hat{q}_2$, $g_1^*=g_1$, $g_2^*=g_2$ and the system of equations \eqref{sym-Toda}, \eqref{nonloc-Toda} imply that
\begin{align}\label{ku1}
&i\hat{q}_{1,t}+\hat{q}_{1,\xi\xi}+\hat{q}_{1,\eta\eta}+\hat{q}_{1}(g_1+g_2)=0,\\
&\frac{\partial^2}{\partial\xi\partial\eta}g_1=-\frac{1}{2}\frac{\partial^2}{\partial\eta^2}|\hat{q}_1|^2, \qquad
\frac{\partial^2}{\partial\xi\partial\eta}g_2=-\frac{1}{2}\frac{\partial^2}{\partial\xi^2}|\hat{q}_1|^2. \nonumber
\end{align}
Adding the last two equalities term by term, we obtain
\begin{equation}\label{g1g2}
\frac{\partial^2}{\partial\xi\partial\eta} \left(g_1+g_2\right)=-\frac{1}{2}\left(\frac{\partial^2}{\partial\eta^2}+\frac{\partial^2}{\partial\xi^2}\right)|\hat{q}_1|^2.
\end{equation}
Let us introduce the notations $u=\hat{q}_1$, $v=g_1+g_2$ and represent the relations \eqref{ku1}, \eqref{g1g2} in the following form
\begin{align*}
&iu_t+\Delta u+uv=0,\\
&v_{\xi\eta}+\frac{1}{2}\Delta |u|^2=0,
\end{align*}
where $\Delta$ stands for the Laplace operator
$$\Delta=\frac{\partial^2}{\partial\xi^2}+\frac{\partial^2}{\partial\eta^2}.$$

In other words, Theorem 5.2 implies the following statement:

\begin{corollary} Functions   
\begin{equation}\label{DS-sol}
\begin{aligned}
&u=-\frac{\beta_1(\xi,t) \beta_2(\eta,t)}{R(\xi,\eta,t)},\\
&v=\frac{1}{R}\left(\frac{\partial}{\partial\xi}|\beta_1(\xi,t) |^2+\frac{\partial}{\partial\eta}|\beta_2(\eta,t) |^2   \right)-\frac{1}{2R^2}\left(|\beta_1(\xi,t) |^4+|\beta_2(\eta,t) |^4   \right)
\end{aligned}
\end{equation}
determine a solution to the Davey--Stewartson equation \eqref{eq-DSI}. The solution depends on arbitrary solutions   $\beta_1(\xi,t)$, $\beta_2(\eta,t)$ of the heat equation with imaginary time (see \eqref{beta-t}), function $R(\xi,\eta,t)$ is given by \eqref{R}.
\end{corollary}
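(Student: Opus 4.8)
The Corollary is an immediate consequence of Theorem 5.2 together with the complex-conjugate reduction carried out in the discussion preceding it; the plan is to verify that this reduction is internally consistent and then simply read off the resulting formulas. No computation beyond that discussion is required.

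First I would check that the reduction \eqref{beta-complex} is compatible with the heat equations \eqref{beta-t}. If $\beta_1(\xi,t)$ solves $\beta_{1,t}=i\beta_{1,\xi\xi}$, then taking the complex conjugate gives $(\beta_1^*)_t=-i(\beta_1^*)_{\xi\xi}$, which is exactly the third equation in \eqref{beta-t}; hence $\tilde\beta_3=\beta_1^*$ respects it, and likewise $\tilde\beta_4=\beta_2^*$ satisfies the fourth equation. Taking $C$ to be a real constant fulfils simultaneously the hypothesis $\partial_t C=0$ of Theorem 5.2 and the requirement $C^*=C$. Therefore, by Theorem 5.2, the functions $\hat q_1,\hat q_2,g_1,g_2$ obtained from \eqref{sol-DS} under \eqref{beta-complex} form a genuine solution of the coupled system \eqref{sym-Toda}, \eqref{nonloc-Toda}.

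Next I would record the symmetry relations induced by the reduction. From \eqref{R} the denominator $R$ is real, and then \eqref{sol-DS} gives $\hat q_1^*=\hat q_2$, $g_1^*=g_1$, $g_2^*=g_2$. Consequently the second scalar equation in \eqref{sym-Toda} is the complex conjugate of the first and carries no new information, while the first becomes $iu_t+\Delta u+uv=0$ after setting $u:=\hat q_1$, $v:=g_1+g_2$ and using $\Delta=\partial_\xi^2+\partial_\eta^2$. To get the second Davey--Stewartson equation, differentiate the first relation of \eqref{nonloc-Toda} in $\eta$ and the second in $\xi$, note that $\hat q_1\hat q_2=|u|^2$, and add the results; this yields $v_{\xi\eta}=-\tfrac12\Delta|u|^2$. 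Substituting the explicit expressions \eqref{sol-DS} with $\tilde\beta_3=\beta_1^*$, $\tilde\beta_4=\beta_2^*$ then produces precisely the formulas \eqref{DS-sol}, with $R$ given by \eqref{R}.

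The only point needing genuine care --- which I would regard as the main (if modest) obstacle --- is to make sure the reduction does not over-determine the system, i.e. that imposing $\tilde\beta_3=\beta_1^*$, $\tilde\beta_4=\beta_2^*$ forces no hidden constraint on $\beta_1,\beta_2$ beyond \eqref{beta-t}. This follows from the observation above that, under $\hat q_1^*=\hat q_2$ and $g_i^*=g_i$, the two equations of \eqref{sym-Toda} and the two equations of \eqref{nonloc-Toda} become pairwise complex-conjugate, so nothing new is imposed; hence $\beta_1(\xi,t)$ and $\beta_2(\eta,t)$ remain free, subject only to the imaginary-time heat equation.
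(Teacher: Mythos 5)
Your argument is correct and follows essentially the same route as the paper: impose the complex-conjugate reduction $\tilde\beta_3=\beta_1^*$, $\tilde\beta_4=\beta_2^*$ with real constant $C$, observe that $R$ is then real so that $\hat q_2=\hat q_1^*$ and $g_1,g_2$ are real, invoke Theorem 5.2, and cross-differentiate the nonlocality relations to obtain $v_{\xi\eta}+\tfrac12\Delta|u|^2=0$. Your explicit check that conjugation carries solutions of $\beta_t=i\beta_{\xi\xi}$ into solutions of $\beta_t=-i\beta_{\xi\xi}$ (so the reduction is compatible with \eqref{beta-t} and imposes no hidden constraint) is a point the paper leaves implicit, but it is the same proof.
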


\section{An illustrative example}

In this section we construct a solution of the Davey--Stewartson I equation using the formulas \eqref{DS-sol}, \eqref{beta-t}. To this end we have to solve the Cauchy problem for the equation
\begin{equation*}
\beta_{1,t}=i\beta_{1,\xi\xi}
\end{equation*}
with some initial data $\beta_1(\xi,0)=\beta_0(\xi)$. Solution to the problem is given by the Poisson formula
\begin{equation}\label{Poisson}
\beta_1(\xi,t)=\frac{1}{2\sqrt{it}}\int\limits_{-\infty}^{\infty}\beta_0(\xi)e^{-\frac{(\xi-\zeta)^2}{4it}}d\zeta.
\end{equation}
It is remarkable that for some special choices of the initial data integral \eqref{Poisson} is evaluated in an explicit form. Let us take $\beta_0(\xi)=e^{-\xi^2}$, then we obtain
\begin{equation*}
\beta_1(\xi,t)=\frac{1}{\sqrt{4it+1}}e^{-\frac{\xi^2}{4it+1}}.
\end{equation*}
The solution to the Cauchy problem $\beta_{2,t}=i\beta_{2,\eta\eta}$, $\beta_2(\eta,0)=e^{-\eta^2}$,
has a similar form, i.e. we find
\begin{equation*}
\beta_2(\eta,t)=\frac{1}{\sqrt{4it+1}}e^{-\frac{\eta^2}{4it+1}}.
\end{equation*}
Then the solution $u(\xi,\eta,t)$ of the Davey--Stewartson equation is given by formula
\begin{equation*}
u(\xi,\eta,t)=-\frac{1}{(4it+1)R}e^{-\frac{\xi^2+\eta^2}{4it+1}},
\end{equation*}
where
\begin{equation*}
R=-\frac{1}{\sqrt{16t^2+1}}\left(\int\limits_{0}^{\xi}e^{-\frac{2\zeta^2}{16t^2+1}}d\zeta+\int\limits_{0}^{\eta}e^{-\frac{2\nu^2}{16t^2+1}}d\nu\right) +C.
\end{equation*}
In this case, the non-local variable $v(\xi,\eta,t)$ has the form:
\begin{align*}
v(\xi,\eta,t)=\frac{8}{(16t^2+1)^{\frac{3}{2}}R}\left(\xi e^{-\frac{2\xi^2}{16t^2+1}}+\eta e^{-\frac{2\eta^2}{16t^2+1}}\right)
-\frac{2}{(16t^2+1)R^2}\left(e^{-\frac{4\xi^2}{16t^2+1}}+ e^{-\frac{4\eta^2}{16t^2+1}}\right).
\end{align*}
Figures 1 and 2 show graphs of the found solution for specific values of the variable $t$ and the constant $C$.
\begin{figure}[h!]
\centering
\begin{minipage}{.5\textwidth}
  \centering
  \includegraphics[width=.5\linewidth]{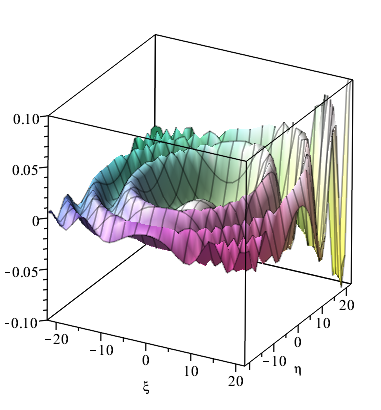}
  \caption{Re(u), t=10, C=1.}
\end{minipage}%
\begin{minipage}{.51\textwidth}
  \centering
  \includegraphics[width=.51\linewidth]{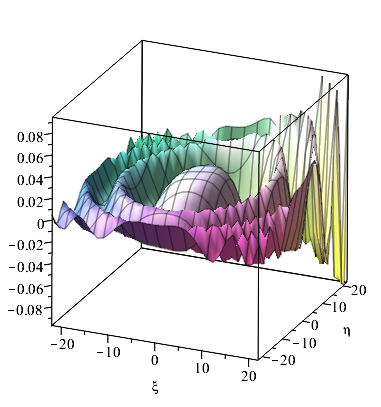}
  \caption{Im(u), t=10, C=1.}
\end{minipage}
\end{figure}

\section*{Conclusion}
In the paper, an algorithm for constructing explicit analytic solutions of the Davey--Stewartson type equations  is developed. It uses a two-dimensional dressing chain. The algorithm is based on the use of the reductions of lattices consistent with higher symmetries and Lax pairs. 
The article examines in detail a specific example where the generalized Toda lattice corresponding to the simple Lie algebra $A_2$ is used as a dressing chain.

 Apparently, the following conjecture deserves attention.
\begin{hypothesis} For arbitrary $N$, the general solution of the open Toda chain associated with the simple Lie algebra $A_N$ generates an explicit solution of the equation \eqref{eq-DSI}.
\end{hypothesis}

Исмагил Талгатович Хабибуллин (ответственный за переписку)\\
Институт математики с ВЦ УФИЦ РАН,\\ ул. Чернышевского, 112,\\ 450008, г.Уфа, Россия \\ 
электронная почта {habibullinismagil@gmail.com}\\

Айгуль Ринатовна Хакимова\\
Институт математики с ВЦ УФИЦ РАН,\\ ул. Чернышевского, 112,\\ 450008, г.Уфа, Россия \\ 
электронная почта {aigul.khakimova@mail.ru}\\

\end{document}